\newcommand{\x}{\mathbf{x}}
\newcommand{\kw}{\mathbf{k}}
\newcommand{\lspace}{\mathcal{L}^2}
\newcommand{\bspace}{\mathcal{B}_\mathcal{S} }
\newcommand{\PS}{P_{\mathcal{S}}}
\newcommand{\lambdam}{\lambda_n^{(m)}}
\newcommand{\Dm}{\chi_{\mathcal{D}_m}}
\newcommand{\Thm}{T_h^{(m)}}
\newcommand{\Thmh}{\hat{T}_h^{(m)}}
\newcommand{\Thh}{\hat{T}_h}
\newcommand{\fm}{f}
\newcommand{\gm}{g}
\newcommand{\anm}{a_n^{(m)}}
\newcommand{\bnm}{b_n^{(m)}}
\newcommand{\cnm}{c_n^{(m)}}
\newcommand{\pro}{\psi_n^{(m)}}
\newtheorem{theorem}{Theorem}
\newtheorem{proposition}{Proposition}
\def\BibTeX{{\rm B\kern-.05em{\sc i\kern-.025em b}\kern-.08em
    T\kern-.1667em\lower.7ex\hbox{E}\kern-.125emX}}
\begin{document}

\title{Extrapolation of Bandlimited Multidimensional Signals from Continuous Measurements\\
\thanks{Identify applicable funding agency here. If none, delete this.}
}

\author{
	\IEEEauthorblockN{Cornelius Frankenbach\IEEEauthorrefmark{1}\IEEEauthorrefmark{2}, Pablo Mart\'{i}nez-Nuevo\IEEEauthorrefmark{1}, Martin M\o ller\IEEEauthorrefmark{1}, Walter Kellermann\IEEEauthorrefmark{2}}
	
	\IEEEauthorblockA{\IEEEauthorrefmark{2}Multimedia Communications and Signal Processing, University of Erlangen-Nuremberg}
	\IEEEauthorblockA{\IEEEauthorrefmark{1}Bang \& Olufsen a/s, R\&D Acoustics}
	}

\maketitle

\begin{acronym}
	\acro{pli}[PLI]{Projected Landweber Iteration}
	\acro{dpli}[DPLI]{Damped Projected Landweber Iteration}
	\acro{pswf}[PSWF]{Prolate Spheroidal Wave Function}
	\acro{dft}[DFT]{Discrete Fourier Transform}
	\acro{idft}[IDFT]{Inverse Discrete Fourier Transform}
	\acro{nmse}[NMSE]{Normalized Mean Square Error}
	\acro{snr}[SNR]{Signal-to-Noise Ratio}
\end{acronym}

\begin{abstract}
	Conventional sampling and interpolation commonly rely on discrete measurements. In this paper, we develop a theoretical framework for extrapolation of signals in higher dimensions from knowledge of the continuous waveform on bounded high-dimensional regions. In particular, we propose an iterative method to reconstruct bandlimited multidimensional signals based on truncated versions of the original signal to bounded regions---herein referred to as continuous measurements. In the proposed method, the reconstruction is performed by iterating on a convex combination of region-limiting and bandlimiting operations. We show that this iteration consists of a firmly nonexpansive operator and prove strong convergence for multidimensional bandlimited signals. In order to improve numerical stability, we introduce a regularized iteration and show its connection to Tikhonov regularization. The method is illustrated numerically for two-dimensional signals.
\end{abstract}

\begin{IEEEkeywords}
	Signal extrapolation, Papoulis' algorithm, signal reconstruction, Tikhonov regularization
\end{IEEEkeywords}

\section{Introduction}
Sampling and reconstruction of one-dimensional bandlimited signals based on discrete samples has been widely studied in signal processing. This extends from the conventional sampling theorem \cite{Shannon:1949aa}, where discrete samples are taken at equally-spaced time instants, to more general irregular sampling schemes \cite{Yen:1956aa,Paley:1934aa}. Further generalizations to other function spaces can be found, for example, in connection to wavelets or parametrized functions \cite{Eldar:2015aa}. 

In contrast, we propose a sampling and reconstruction paradigm for bandlimited multidimensional signals where measurements consist of truncated versions of the original signal, i.e., they are based on knowledge of the full waveform on a predefined set of bounded regions. We will refer to these as continuous measurements. In particular, consider the function $\chi_{\mathcal{D}}$ which evaluates to 1 if its argument is in the set $\mathcal{D}$, and 0 otherwise. Then, the measurements of the source signal $h$ are given by the set $\{\chi_{\mathcal{D}_m}(\x)\cdot h(\x)\}_{m=1}^M$ where $\mathcal{D}_m$ corresponds to a bounded region, $M$ is a positive integer denoting the total number of measurements, and $\mathbf{x}\in\mathbb{R}^N$ for some $N\geq1$. Fig.~\ref{fig:Truncation} shows an example of the sampling regions considered in this framework. In practice, conducting or processing a continuous measurement is non-trivial and is beyond the scope of this paper. We work on the assumption that the underlying signal $h$ is continuously known inside all regions $\mathcal{D}_m$.

\begin{figure}[!t]
	\centering
	\includegraphics[width = 0.3\textwidth]{./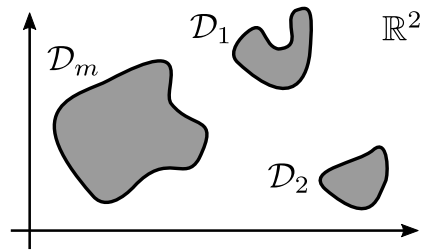}
	\setlength{\belowcaptionskip}{-20pt}
	\caption{Continuous measurements on $M\geq1$ non overlapping regions $\mathcal{D}_m$.}
	\label{fig:Truncation}
\end{figure}

In some settings, these continuous measurements can be further alleviated by only considering the values along the contours of the regions. For example, in acoustics, optics, or seismology \cite{Williams:1999aa, Born:2013aa,schleicher2007seismic}, the Kirchhoff-Helmholtz integral equation fully determines the values inside a region based on the values on its contour. Additionally, continuous measurements can be viewed as an extreme case of discrete sampling by highly oversampling confined regions in space. This can be of particular interest in room acoustics where it is becoming common that commercial loudspeakers are equipped with an increasing number of built-in microphones. Many of these multidimensional functions are approximately bandlimited and their reconstruction is useful in many applications \cite{Grande:2019aa,Chai:2000aa}.

At first sight, recovery of a function in the entire space, from continuous measurements in a subspace, seems hopeless since some information about the signal is expected to be lost and the solution to the reconstruction problem may not be unique unless further knowledge of the signal structure is available. However, the main observation underlying our approach is that bandlimited signals---in any dimension---can be viewed as analytic functions on their whole domain \cite{Paley:1934aa}. By analytic continuation \cite{Stein:2003aa}, a unique representation by their values, for example, on a bounded region, is then guaranteed.

Within the context of spectral estimation, Papoulis \cite{papoulis1975new} exploited this property by proposing an iterative method to reconstruct a one-dimensional bandlimited signal from a single segment. In \cite{cadzow1979extrapolation}, this approach was later formalized within operator theory where a one-step procedure of Papoulis' iteration was derived for a restricted class of bandlimited signals. The first connection to nonexpansive operators was introduced in \cite{tom1981convergence} but still in the one-dimensional case. In connection to the uncertainty principle, it was shown in \cite{donoho1989uncertainty} that stable reconstruction using Papoulis' algorithm is possible, for example, whenever the missing information of the signal of interest lies on finite segments. Note that the latter assumes knowledge of the whole signal except for some missing segments as opposed to known segments as in \cite{papoulis1975new}. Papoulis' algorithm can also be viewed as an alternating projection method that has been applied to more general scenarios assuming different a priori knowledge of the signal \cite{landau1961recovery, youla1982image, sezan1982image, schafer1981constrained}. 

In this paper, we prove reconstruction of bandlimited multidimensional signals from multiple continuous measurements, i.e., from individual---possibly weighted---truncated versions of the original signal. We first formulate the problem in Section \ref{sec:prob}. In Section \ref{sec:def}, we propose an iterative method, consisting of a firmly nonexpansive operator that inverts these truncation operators. We prove strong convergence of this sequence of iterates. In Section \ref{sec:Regularization}, we introduce a regularized version of this method in order to improve robustness. We illustrate our theoretical results numerically in Section \ref{sec:simulations} by providing some examples in the two-dimensional case.

% SECTION 2
\section{Problem formulation}
\label{sec:prob}
We aim at reconstructing a bandlimited signal $h:\mathbb{R}^N\to\mathbb{R}$ from its continuous measurements. In other words, the goal is to extrapolate a bandlimited function $h$ to its entire domain $\mathbb{R}^N$ when $h$ is only known on a bounded set $\mathcal{D}\subset\mathbb{R}^N$, where $\mathcal{D}=\bigcup_{m=1}^{M}\mathcal{D}_m$.

We consider a function bandlimited if its Fourier transform has compact support, and we denote the closed convex set of bandlimited functions as
\begin{equation}
	\label{eq:bspace}
	\bspace= \left\{f \in \lspace(\mathbb{R}^N):f(\x)=(2\pi)^{-N}\int_{\mathcal{S}}^{}F(\kw)e^{j\x\cdot\kw}d\kw\right\}
\end{equation}
where $F(\kw)$ is the $N$-dimensional Fourier transform of $f$ with frequency variable $\kw$ and $\mathcal{S}\subset\mathbb{R}^N$ is a compact set. Thus, this definition naturally includes, for example, the notion of bandpass signals.

Note that, by analytic continuation, the truncation operation $\chi_\mathcal{D}$ provides a unique representation of a bandlimited signal. In the following Section, we propose a sequence of iterates $f_k$ that invert this truncation operation and converge strongly to the original signal, i.e., $f_k\to h$.

% SECTION 3
\section{Extrapolation algorithm}
\label{sec:def}
In order to perform reconstruction, based on the measurements $\{\chi_{\mathcal{D}_m}(\x)\cdot h(\x)\}_{m=1}^M$, we define the operator $T_h:\ \mathcal{B}_\mathcal{S}\rightarrow \mathcal{B}_\mathcal{S}$ as follows:
\begin{equation}
	\label{eq:WeightedOperator}
		f\mapsto T_hf:=\sum_{m=1}^{M}\omega_mP_\mathcal{S}(f+\Dm(h-f))
\end{equation}
where $\PS$ is a projection onto the set $\bspace$ and $\sum_{m=1}^M\omega_m=1$ for $\omega_m\in(0,1]$. Then, we introduce the following iteration for $k\geq0$ 
\begin{equation}
	\label{eq:Iteration}
		f_{k+1}:=T_hf_k = \sum_{m=1}^{M}\omega_mP_\mathcal{S}(f_k+\Dm(h-f_k))
\end{equation}

where the iteration starts by using the bandlimited truncated equivalent of $h$, consisting of the continuous measurements of the function $h$ on the corresponding bounded regions, i.e., $f_0=\sum_{m=1}^{M}\PS(\Dm(h))$.

The idea is to iteratively update the signal in the entire domain using the observed signal $h$ in the regions and the bandlimiting operation. Every iteration consists of two steps: First, the current estimate $f_k$ is updated by adding the term $\omega_m(h-f_k)$ within the respective regions $\mathcal{D}_m$. Then, the updated estimate is bandlimited, e.g., by lowpass filtering. Assuming a one-dimensional signal and a single interval, the above iteration reduces to the one presented in \cite{papoulis1975new}.

In the remainder of this Section, we prove strong convergence of the iteration to a unique fixed point, corresponding to the original signal $h$. We show its relation to the \acl{pli}, introduce a regularization term, and discuss the relationship between the continuous measurements and their weights $\omega_m$.

\subsection{Convergence}
\label{sec:proof}
First, let us introduce the operator---with the same domain and co-domain---acting only on a single region, i.e., $\Thm f:=\PS(f+\Dm(h-f))$. The strategy is to first show in Proposition \ref{prop:uniqueness} that an iteration of the form presented above using $T_h^{(m)}$ has a unique fixed point in the space of bandlimited functions. The underlying concept here is again analytic continuation. Then, we show in Theorem \ref{theorem:FirmlyStrongly} that the iteration in (\ref{eq:Iteration}) is actually a convex combination of firmly nonexpansive operators also resulting in a firmly nonexpansive operator. Then, strong convergence to a unique fixed point is guaranteed by the corresponding additional properties.

We will use the well-known fact from operator theory that a sequence of iterates $(T^nx)_{x\in\mathbb{N}}$ converges weakly, denoted by $\rightharpoonup$, to a fixed point if $T$ is firmly nonexpansive and  the set of all fixed points, denoted by $\text{Fix}\ T$, is nonempty \cite{bauschke2012firmly}. The standard inner product in $\lspace$ is denoted by $\langle \cdot,\cdot\rangle$ and the induced norm $\|f\|=\sqrt{\langle f,f\rangle}$.

% PROPOSITION 1
\begin{proposition}
	\label{prop:uniqueness}
	The function $h$ is the unique fixed point for the operator $\Thm$, i.e., $\mathrm{Fix\ }\Thm=\{h\}$.
\end{proposition}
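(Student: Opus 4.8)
The plan is to establish the two inclusions $\{h\}\subseteq\mathrm{Fix}\,\Thm$ and $\mathrm{Fix}\,\Thm\subseteq\{h\}$ separately. The first is immediate: since $h\in\bspace$ we have $\PS h=h$, so $\Thm h=\PS(h+\Dm(h-h))=\PS h=h$, which shows $h$ is a fixed point. The substance of the proposition is the reverse inclusion, i.e.\ uniqueness.

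For uniqueness, I would exploit that $\bspace$ is a closed linear subspace of $\lspace$ --- it is exactly the set of $\lspace$-functions whose Fourier transform is supported in $\mathcal{S}$, a class closed under linear combinations --- so that $\PS$ is the orthogonal projection onto it and is therefore linear. Let $f\in\mathrm{Fix}\,\Thm$; since any fixed point lies in the domain $\bspace$, we have $\PS f=f$. Writing out $f=\Thm f=\PS f+\PS\bigl(\Dm(h-f)\bigr)$ and cancelling $\PS f=f$ on both sides reduces the fixed-point equation to the single condition $\PS\bigl(\Dm(f-h)\bigr)=0$.

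Next I would set $g:=f-h$, which is again bandlimited, i.e.\ $g\in\bspace$. Because $\PS$ is an orthogonal projection, $\PS(\Dm g)=0$ says precisely that $\Dm g$ lies in $\bspace^{\perp}$; pairing it with $g\in\bspace$ then gives $0=\langle \Dm g, g\rangle=\int_{\mathcal{D}_m}|g(\x)|^2\,d\x$. Hence $g$ vanishes almost everywhere on $\mathcal{D}_m$.

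The main obstacle --- and the step where the analytic-continuation principle announced in the text does the real work --- is to upgrade this local vanishing to $g\equiv0$ on all of $\mathbb{R}^N$. I would invoke the Paley--Wiener theorem: since $g$ is bandlimited, it is the restriction to $\mathbb{R}^N$ of an entire function on $\mathbb{C}^N$, in particular real-analytic. A nontrivial real-analytic function has a zero set of measure zero, so vanishing on the positive-measure region $\mathcal{D}_m$ forces $g\equiv0$, i.e.\ $f=h$. (Equivalently, since $\mathcal{D}_m$ has nonempty interior, $g$ vanishes on an open set and the identity theorem applies directly.) This yields $\mathrm{Fix}\,\Thm=\{h\}$. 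The only point requiring care is the several-variable version of the identity theorem, which is why I would phrase the conclusion through the measure-zero property of zero sets of real-analytic functions rather than the one-dimensional accumulation-point argument.
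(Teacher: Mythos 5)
Your proof is correct and follows essentially the same route as the paper's: verify that $h$ is fixed, show that any fixed point must agree with $h$ on $\mathcal{D}_m$, and conclude $f\equiv h$ by analytic continuation. You in fact supply two details the paper's terse proof leaves implicit---deriving $\|\Dm(f-h)\|=0$ from the linearity of $\PS$ and the orthogonality pairing $\langle \Dm g, g\rangle=0$, and using the measure-zero zero sets of real-analytic functions as the correct multivariate substitute for the one-dimensional identity theorem---so yours is a more complete rendering of the same argument.
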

\begin{proof}
	First, we note that
	\begin{equation}
	\label{eq:fixed}
	\Thm h = \PS(h+(h-h)\Dm)=\PS h = h
	\end{equation}
	where the last step follows from the assumption that $h\in\bspace$. Now we need to show that $\mathrm{Fix\ }\Thm$ consists of a singleton by contradiction. Assume there exists a function $f\neq h$ satisfying (\ref{eq:fixed}). This implies that $||h\Dm-f\Dm||=0$. Since there always exists a continuous representative in the corresponding equivalence classes, we have that $||h-f||=0$ by analytic continuation. Thus, $h\equiv f$. 
\end{proof}

The following result shows the convergence of our proposed iteration. Alongside this, we also prove strong convergence based on single bounded regions, i.e., using $T_h^{(m)}$. Interestingly, the latter generalizes and formalizes the convergence results in \cite{papoulis1975new}.

% PROPOSITION 2
\begin{theorem}
	\label{theorem:FirmlyStrongly}
	The sequence of iterates presented in (\ref{eq:Iteration}) converges strongly to the function $h$, i.e., $f_k\to h$.
\end{theorem}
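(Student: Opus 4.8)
The plan is to exploit the structure already assembled before the statement. Because $\bspace$ is a closed subspace, $\PS$ is the orthogonal projection onto it, hence linear and self-adjoint; this is exactly what makes each $\Thm$ affine on $\bspace$ with a self-adjoint linear part, a property that $T_h$ inherits. First I would rewrite $\Thm$ on its domain: since every $f\in\bspace$ obeys $\PS f=f$, I may insert a redundant projection and write $\PS\Dm f=\PS\Dm\PS f$, giving
\[
	\Thm f=f-\PS\Dm\PS f+\PS\Dm h=(I-B_m)f+\PS\Dm h,\qquad B_m:=\PS\Dm\PS .
\]
Thus the increment of $\Thm$ on $\bspace$ depends only on the linear operator $I-B_m$, i.e. $\Thm f-\Thm g=(I-B_m)(f-g)$, and summing with the weights yields $T_hf-T_hg=(I-B)(f-g)$ with $B:=\sum_{m=1}^{M}\omega_mB_m=\PS\big(\sum_{m=1}^{M}\omega_m\Dm\big)\PS$.

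The next step is firm nonexpansiveness. The operator $B_m=\PS\Dm\PS$ is self-adjoint, and for $f\in\bspace$ one has $\langle B_mf,f\rangle=\langle\Dm f,f\rangle=\|\Dm f\|^2$, so $0\le\langle B_mf,f\rangle\le\|f\|^2$ because $\Dm$ is multiplication by an indicator; hence $0\le B_m\le I$. A self-adjoint $L$ with $0\le L\le I$ satisfies $L(I-L)\ge0$, which is precisely the condition $\|Lx\|^2\le\langle Lx,x\rangle$ defining firm nonexpansiveness of the linear part. Applying this to $L=I-B_m$ (whose spectrum again lies in $[0,1]$) shows $\Thm$ is firmly nonexpansive, an affine map being firmly nonexpansive exactly when its linear part is. The same estimate gives $0\le B\le I$, so $T_h$ is firmly nonexpansive too; equivalently $2T_h-I=\sum_m\omega_m(2\Thm-I)$ is a convex combination of nonexpansive operators and therefore nonexpansive, which is the convex-combination claim announced in the text.

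With $h\in\mathrm{Fix}\,T_h$ (immediate from Proposition~\ref{prop:uniqueness}, since $T_hh=\sum_m\omega_mh=h$), firm nonexpansiveness together with the nonempty fixed-point set already delivers \emph{weak} convergence of $(f_k)$ to a fixed point via \cite{bauschke2012firmly}. The main obstacle, where I expect the real work to lie, is upgrading this to \emph{strong} convergence and pinning the limit to $h$. I would track the error $e_k:=f_k-h$, which by the affine structure satisfies $e_{k+1}=(I-B)e_k$, hence $e_k=(I-B)^k e_0$. Writing $S:=I-B$, self-adjoint with spectrum in $[0,1]$, the spectral theorem gives $S^k\to E(\{1\})$ strongly, where $E(\{1\})$ is the orthogonal projection onto $\ker B$. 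Everything therefore reduces to showing $B$ injective, and here analytic continuation re-enters exactly as in Proposition~\ref{prop:uniqueness}: if $Bf=0$ then $\langle Bf,f\rangle=\int w\,|f|^2\,d\x=0$ with weight $w=\sum_m\omega_m\Dm$ strictly positive on $\mathcal{D}$, so $f$ vanishes on $\mathcal{D}$ and, being bandlimited and hence analytic, vanishes identically. Thus $\ker B=\{0\}$, so $E(\{1\})=0$ and $S^k\to0$ strongly, whence $f_k=h+S^ke_0\to h$, establishing strong convergence to the original signal.
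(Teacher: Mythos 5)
Your proof is correct, but it takes a genuinely different route from the paper's. The paper diagonalizes each single-region operator in the \acs{pswf} basis to get firm nonexpansiveness, proves single-region strong convergence via Papoulis' explicit iterate formula combined with Tannery's theorem and the Radon--Riesz theorem, and then handles the combined $T_h$ through averaged-operator machinery: each $\Thm$ is $1/2$-averaged, convex combinations of averaged maps are averaged, $\mathrm{Fix}\ T_h=\bigcap_m\mathrm{Fix}\ \Thm=\{h\}$, asymptotic regularity, and finally Petryshyn's theorem. You instead exploit that $\bspace$ is a closed \emph{subspace}, so each $\Thm$ is affine with self-adjoint linear part $I-B_m$, $B_m=\PS\Dm\PS$, and the whole theorem collapses to the spectral-theoretic fact that powers of a self-adjoint $S=I-B$ with $0\leq B\leq I$ converge strongly to the projection onto $\ker B$, which analytic continuation shows is trivial on $\bspace$ (implicitly assuming, as the paper's Proposition~\ref{prop:uniqueness} also does, that the $\mathcal{D}_m$ have positive measure; you should also note that you regard $S$ as acting on the Hilbert space $\bspace$, since on all of $\lspace(\mathbb{R}^N)$ the kernel of $B$ contains $\bspace^\perp$---harmless, because the error $e_k$ stays in $\bspace$). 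Your argument is more self-contained and direct: it needs no \acs{pswf} theory, no Tannery, no Radon--Riesz, no Petryshyn, it treats the multi-region operator in one stroke rather than via single-region convergence plus fixed-point intersection, it works from any bandlimited initializer rather than the specific $f_0$ in the text, and it makes the convergence mechanism transparent (the rate is governed by the spectrum of $B$ near $0$, which also explains why no uniform geometric rate exists in general, consistent with the paper obtaining Banach contractions only on the truncated space $\bspace^N$). What the paper's route buys is the explicit coefficientwise decay $(1-\lambdam)^k$ connecting to Papoulis' original result and feeding directly into its later truncated-\acs{pswf} contraction estimates, and a framework (averaged operators, demiclosedness-type arguments) that would survive if $\PS$ were a projection onto a genuinely convex, non-subspace constraint set, where your affine-linearity reduction breaks down. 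Indeed the two firm-nonexpansiveness arguments are the same fact in different coordinates: restricted to $\bspace$, your $B_m$ has precisely the \acsp{pswf} $\pro$ as eigenfunctions with eigenvalues $\lambdam$, so the paper's identities \eqref{eq:firmly_left}--\eqref{eq:firmly_right} are the diagonalized form of your operator inequality $0\leq B_m\leq I$.
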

\begin{proof}
	First, we focus on results concerning a single iteration $f_{k+1}^{(m)}:=T_h^{(m)}f_{k}^{(m)}$ where we only use a single bounded region $\mathcal{D}_m$. Let us start by evaluating the expression 
	\begin{equation}
	\label{eq:firmly_left}
	\|\Thm \fm-\Thm \gm\|^2=\sum_{n=0}^{\infty}(\anm-\bnm)^2(1-\lambdam)^2
	\end{equation}
	where we used the fact that $f, g\in\bspace$ can be expanded into \acp{pswf} $\pro$ with eigenvalue $\lambdam$ \cite{slepian_I, slepian_IV, bertero1998introduction} such that $f=\sum_{n=0}^{\infty}\anm\pro$ and $g=\sum_{n=0}^{\infty}\bnm\pro$. We can then write
	\begin{equation}
	\label{eq:firmly_right}
	\langle \fm-\gm,\Thm \fm-\Thm \gm\rangle= \sum_{n=0}^{\infty}(\anm-\bnm)^2(1-\lambdam).
	\end{equation}
	Since $0<\lambdam<1$ \cite{slepian_I, slepian_IV, bertero1998introduction}, we then have that $\|\Thm \fm-\Thm \gm\|^2\leq\langle \fm-\gm,\Thm \fm-\Thm \gm\rangle$. Therefore, $\Thm$ is firmly nonexpansive \cite{bauschke2012firmly}. 
	
	Interestingly, we can prove that $f_k^{(m)}\to h$. Consider the following
	\begin{gather}
	\label{eq:fknorm}
	||f_k^{(m)}||^2 = \sum_{n=0}^{\infty}(\cnm)^2(1-(1-\lambda_n^{(m)})^k)^2
	\end{gather}
	where we have used the series expansion of $h\in\bspace$ with coefficients $\cnm$ and $f_k^{(m)} = h-\sum_{n=0}^{\infty}\cnm(1-\lambdam)^k\pro$ \cite[Theorem 1]{papoulis1975new}. The eigenvalues $\lambdam$ are positive and bounded by 1, thus $(1-(1-\lambda_n^{(m)})^k)^2<1$. Then, the terms in the summation in (\ref{eq:fknorm}) satisfy
	\begin{equation}
	\lim_{k\to\infty}(\cnm)^2(1-(1-\lambda_n^{(m)})^k)^2=(\cnm)^2
	\end{equation}
	and 
	\begin{equation}
	(\cnm)^2(1-(1-\lambda_n^{(m)})^k)^2\leq (\cnm)^2
	\end{equation}
	for $m=1,\ldots,M$ and $n\geq0$. Then, by Tannery's Limiting Theorem \cite[Chapter 3]{loya2017amazing}, it can be concluded that $\lim_{k\to\infty}||f_k^{(m)}||=||h||$. As $\Thm$ is firmly nonexpansive, we have that $f_k^{(m)}\rightharpoonup h$ and by the Radon-Riesz Theorem \cite[Chapter 8]{royden1968real} it follows that $f_k^{(m)}\to h$. 
	
	The operators $T_h^{(m)}$ are $1/2$-averaged \cite{bauschke2011convex}. It readily follows from \cite[Proposition 2.2]{Combettes:2015aa} that $T_h$ is also $1/2$-averaged, i.e., firmly nonexpansive, and $\mathrm{Fix}\ T_h=\bigcap_{m=1}^{M}\mathrm{Fix}\ T_h^{(m)}=h$ \cite[Corollary 5.19]{bauschke2011convex}. From \cite[Corollary 5.17(ii)]{bauschke2011convex}, we have that $T_hf_k-f_k\to0$ which, together with \cite[Theorem 1.2]{Petryshyn:1973aa}, implies that $f_k\to h$.
\end{proof}

% Projected Landweber Iteration
\subsection{Relation to the \acl{pli}}
If we consider recovery from unweighted continuous measurements, i.e., setting each $\omega_m=1$, the iterative reconstruction proposed in (\ref{eq:Iteration}) can be viewed as a Projected Landweber Iteration. This also guarantees nonexpansiveness and strong convergence \cite{eicke1992iteration,sanz1983papoulis}. This can be seen by noticing that the operator $\sum_{m=1}^{M}\Dm$ is self-adjoint and idempotent---thus an orthogonal projection---and that $\PS$ is a projection onto the closed and convex set $\mathcal{B}_{\mathcal{S}}$. It is important to emphasize that this does not apply to the case described above, i.e., we have instead a convex combination of truncation operators $\sum_{m=1}^{M}\omega_m\Dm$ for $\sum_{m=1}^{M}\omega_m=1$ and positive weights.

\subsection{Regularization}
\label{sec:Regularization}
The convergence guarantees of Theorem \ref{theorem:FirmlyStrongly} are relevant in an error-free scenario. In practice, however, there may be many sources of error that can cause the iteration in \eqref{eq:Iteration} to become unstable. For example, a nonideal lowpass filtering operation, the signal $h$ being only approximately bandlimited, or the values of $h$ in the corresponding measurement sets corrupted by numerical errors and noise. We show an example of this instability in Section \ref{sec:simulations}.

In order to make the reconstruction robust, we introduce a regularized version of our method by replacing the operator $T_h$ in \eqref{eq:Iteration} by the following regularized operator
\begin{equation}
	\label{eq:RegOperator}
	\Thh := \sum_{m=1}^{M}\omega_m\Thmh
\end{equation}
where $\Thmh(f):=\PS[(1-\mu\tau)f+\tau\Dm(h-f)]$, $\sum_{m=1}^M\omega_m=1$ for $\omega_m\in(0,1]$, and $\mu,\tau$ are regularization parameters. In the following, we will refer to this as the regularized case. If we choose $\mu>0$ and $0<\tau<2/(1+2\mu)$, it can be shown that the individual operators $\Thmh$ are Banach contractions with Lipschitz constant $1-\mu\tau$ \cite{eicke1992iteration,bertero1998introduction}. Moreover, the unique fixed point of $\Thmh$ is precisely the minimizer of the following Tikhonov functional, i.e.,
\begin{equation}
	\label{eq:TikhonovFunctional}
	\mathrm{Fix}\ \Thmh = \underset{f\in\bspace}{\arg\min}\|\Dm f-h\|^2+\mu\cdot\|f\|^2
\end{equation}
and $f_{k+1}=\Thmh(f_k)$ for a single region is known as Damped Projected Landweber Iteration \cite{eicke1992iteration}.

It turns out that it is possible to draw similar conclusions about the combined operator $\Thh$, i.e., it follows from \cite[Lemma 4.11]{bauschke2012firmly} that $\Thh$ is also a Banach contraction with Lipschitz constant $1-\mu\tau$. Additionally, we show in the next result that the unique fixed point of $\Thh$ is also the unique minimizer of a Tikhonov functional.

\begin{theorem}
	If $\hat{f}^*\in\bspace$ is the unique fixed point of $\Thh$, i.e., $\Thh\hat{f}^*=\hat{f}^*$, then $\hat{f}^*$ is also the unique minimizer of
	\begin{equation}
		\label{eq:OptConvexCombination}
		\hat{f}^*=\underset{f\in\bspace}{\arg\min}\sum_{m=1}^M\omega_m\|\Dm f-h\|^2+\mu\cdot\|f\|^2
	\end{equation}
	for $\mu>0$.
\end{theorem}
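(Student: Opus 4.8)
The plan is to reduce the theorem to the statement that the fixed-point equation $\Thh\hat{f}^*=\hat{f}^*$ and the first-order optimality condition for the problem in \eqref{eq:OptConvexCombination} describe the same point, after which uniqueness on both sides closes the argument. First I would record that the target functional
\[
J(f):=\sum_{m=1}^M\omega_m\|\Dm f-h\|^2+\mu\|f\|^2
\]
is, for $\mu>0$, strictly convex and coercive on $\lspace$, and that $\bspace$ is in fact a closed subspace---the functions whose Fourier transform is supported in the fixed set $\mathcal{S}$---so that $\PS$ is the orthogonal projection onto it. Consequently $J$ admits a unique minimizer $\hat{f}^{**}\in\bspace$.

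Next I would derive the variational characterization of $\hat{f}^{**}$. Each truncation operator is self-adjoint and idempotent, i.e.\ $\Dm^*=\Dm$ and $\Dm^2=\Dm$, so differentiating $J$ and collapsing the cross term gives the ambient gradient $\nabla J(f)=2\big[\sum_{m=1}^M\omega_m(\Dm f-\Dm h)+\mu f\big]$. Because the constraint set is a subspace, the optimality condition is not a variational inequality but the orthogonality relation $\PS\nabla J(\hat{f}^{**})=0$, that is
\[
\PS\Big[\sum_{m=1}^M\omega_m(\Dm\hat{f}^{**}-\Dm h)+\mu\hat{f}^{**}\Big]=0.
\]

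Then I would expand the fixed-point equation and match it to this relation. Using linearity of $\PS$, the normalization $\sum_{m=1}^M\omega_m=1$, and $\PS\hat{f}^*=\hat{f}^*$ (valid since $\hat{f}^*\in\bspace$), the definition of $\Thmh$ collapses the equation $\Thh\hat{f}^*=\hat{f}^*$ into
\[
(1-\mu\tau)\hat{f}^*+\tau\PS\Big[\sum_{m=1}^M\omega_m\Dm(h-\hat{f}^*)\Big]=\hat{f}^*;
\]
dividing by $\tau>0$ and rearranging reproduces exactly the orthogonality condition for $J$. Hence $\hat{f}^*$ is a minimizer of $J$. Since $\Thh$ is a Banach contraction its fixed point is unique, and since $J$ is strictly convex its minimizer is unique; as the two are characterized by the same equation, $\hat{f}^*=\hat{f}^{**}$ is the unique minimizer in \eqref{eq:OptConvexCombination}.

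The main obstacle I anticipate is the careful handling of the constraint in the optimality condition: one must justify replacing the unconstrained stationarity $\nabla J=0$ by the projected condition $\PS\nabla J=0$, which hinges on $\bspace$ being a genuine subspace and $\PS$ being the associated orthogonal projection, and then confirm that this is precisely the projection that surfaces when $\Thh$ is expanded. Everything else---the gradient computation and the rearrangement---is routine once the self-adjointness and idempotency of $\Dm$ are used on the cross term.
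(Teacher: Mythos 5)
Your proposal is correct and follows essentially the same route as the paper: strict convexity of the Tikhonov functional, the first-order condition obtained via the self-adjoint idempotent operators $\Dm$, projection of the stationarity condition onto $\bspace$, and identification of the resulting equation with the expanded fixed-point relation $\Thh\hat{f}^*=\hat{f}^*$ (using $\PS\hat{f}^*=\hat{f}^*$ and $\sum_m\omega_m=1$). If anything, your treatment is slightly more careful than the paper's at the one delicate step---you justify replacing $\nabla J=0$ by $\PS\nabla J=0$ via the observation that $\bspace$ is a closed subspace rather than merely a closed convex set, and your signs in the projected optimality condition are consistent, whereas the paper's displayed equation carries an apparent sign typo.
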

\begin{proof}
	The functional defined by $\gamma(f):=\sum_{m=1}^M\omega_m\|\Dm f-h\|^2+\mu\cdot\|f\|^2$ for $f\in \lspace(\mathbb{R}^N)$ is strictly convex for $\mu>0$. Then, it is necessary and sufficient for $\hat{f}$ to be the minimizer of the unconstrained problem that the functional derivative $\gamma(\hat{f}^*)'f=0$ for all $f\in \lspace(\mathbb{R}^N)$. Following a reasoning similar to \cite[Theorem 5.2]{engl1996regularization}, we have that
	\begin{equation}
	\begin{split}
	\gamma(\hat{f})'f=2\sum_{m=1}^M\omega_m\langle \Dm(h-\hat{f})+\mu\hat{f},f\rangle=0.
	\end{split}
	\end{equation}
	for all $f\in \lspace(\mathbb{R}^N)$. Since we are looking for solutions constrained to be bandlimited, the minimizer is found by means of a projection onto the closed convex set $\bspace$, i.e.,
	\begin{equation}
	\mu\hat{f}^*+\PS\sum_{m=1}^{M}\omega_m\Dm(h-\hat{f}^*)=0
	\end{equation}
	which is equivalent to the fixed point property $\Thh\hat{f}^*=\hat{f}^*$.
\end{proof}

\subsection{Truncated \acs{pswf} expansion}
If the bandlimited functions of interest further satisfy that they can be represented by a finite number of prolate coefficients, there are several results that follow directly from the previous sections. Let us first formally introduce this set of functions as follows
\begin{equation}
	\bspace^N = \{f\in\bspace:f=\sum_{n=0}^N\anm\pro, N>0,\anm\in\ell^2(\mathbb{R})\}.
\end{equation}

\subsubsection{Unregularized case}
Even without regularization, we have a contraction mapping with the corresponding stability and convergence guarantees \cite[Theorem 1.50]{bauschke2011convex}. The next result shows that $\Thm:\bspace^N\to\bspace^N$ is a Banach contraction. In consequence \cite[Lemma 4.11]{bauschke2012firmly}, $T_h:\bspace^N\to\bspace^N$ is also a Banach contraction with Lipschitz constant $\sum_{m=1}^{M}\omega_m(1-\lambda_N^{(m)})$.  
\begin{proposition}
	\label{prop:FiniteContractionUnreg}
	The operator $\Thm:\bspace^N\to\bspace^N$ is a Banach contraction with Lipschitz constant $1-\lambda_N^{(m)}$.
\end{proposition}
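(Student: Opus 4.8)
The plan is to reuse the prolate-coefficient identity already established in the proof of Theorem~\ref{theorem:FirmlyStrongly} and then exploit two features special to $\bspace^N$: the expansions are finite, and the prolate eigenvalues are ordered. Throughout I write $f=\sum_{n=0}^{N}\anm\pro$ and $g=\sum_{n=0}^{N}\bnm\pro$ for arbitrary $f,g\in\bspace^N$.

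First I would record that $\Thm$ indeed maps $\bspace^N$ into itself, so that the contraction statement is well posed. Since each $f\in\bspace^N$ is bandlimited we have $\PS f=f$, and using the eigenrelation $\PS\Dm\pro=\lambdam\pro$ together with $h=\sum_{n=0}^{N}\cnm\pro$ one obtains
\begin{equation}
\Thm f=\sum_{n=0}^{N}\bigl[(1-\lambdam)\anm+\lambdam\cnm\bigr]\pro,
\end{equation}
which again has no modes beyond index $N$ and hence lies in $\bspace^N$.

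Next, specialising the identity (\ref{eq:firmly_left}) to finite expansions gives $\|\Thm f-\Thm g\|^2=\sum_{n=0}^{N}(\anm-\bnm)^2(1-\lambdam)^2$, the tail with $n>N$ being absent. The crucial step is to replace each factor $(1-\lambdam)$ by its largest value over the retained modes. Because the prolate eigenvalues are indexed in nonincreasing order, $\lambda_0^{(m)}\ge\cdots\ge\lambda_N^{(m)}>0$, we have $1-\lambdam\le 1-\lambda_N^{(m)}$ for every $n\le N$, so that
\begin{equation}
\|\Thm f-\Thm g\|^2\le(1-\lambda_N^{(m)})^2\sum_{n=0}^{N}(\anm-\bnm)^2=(1-\lambda_N^{(m)})^2\,\|f-g\|^2,
\end{equation}
where the last equality is Parseval's identity for the orthonormal PSWFs. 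Taking square roots yields $\|\Thm f-\Thm g\|\le(1-\lambda_N^{(m)})\|f-g\|$, and since $0<\lambda_N^{(m)}<1$ the constant $1-\lambda_N^{(m)}$ lies in $(0,1)$, establishing the Banach contraction.

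The only nonroutine ingredient is the monotone ordering of the eigenvalues, which guarantees that $\lambda_N^{(m)}$ is the smallest eigenvalue among the active modes and therefore that $1-\lambda_N^{(m)}$ is the tightest single factor bounding all the $(1-\lambdam)$. In one dimension this ordering is classical; here I would simply invoke the convention, consistent with the cited prolate references, that the multidimensional eigenvalues $\lambda_n^{(m)}$ are arranged in nonincreasing order. With that in hand, the contraction of the combined operator $T_h$, with constant $\sum_{m=1}^{M}\omega_m(1-\lambda_N^{(m)})$, follows from the convex-combination argument already quoted via \cite[Lemma 4.11]{bauschke2012firmly}.
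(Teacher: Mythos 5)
Your proof is correct and takes essentially the same route as the paper's: specialize the identity (\ref{eq:firmly_left}) to finite expansions and bound each factor $1-\lambdam$ by $1-\lambda_N^{(m)}$ using the nonincreasing eigenvalue ordering, concluding via $0<\lambda_N^{(m)}<1$. Your extra verification that $\Thm$ maps $\bspace^N$ into itself---which, as your computation shows, requires $h\in\bspace^N$, an assumption the paper leaves implicit in its setup of the truncated-expansion case---is a welcome addition but does not change the argument.
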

\begin{proof}
	Similar to (\ref{eq:firmly_left}), we have that
	\begin{equation}
	\begin{split}
	||\Thm f-\Thm g||^2&=\sum_{n=0}^N|\anm-\bnm|^2(1-\lambdam)^2\\
	&\leq(1-\lambda_N^{(m)})^2||f-g||^2
	\end{split}
	\end{equation}
	where the inequality follows from the fact that the corresponding eigenvalues form a decreasing sequence \cite{slepian_I,slepian_IV, bertero1998introduction}. Since $0<\lambdam<1$ for all $n,m$ \cite{slepian_I,slepian_IV,bertero1998introduction}, then $0<1-\lambda_N^{(m)}<1$ which guarantees a contraction mapping.
\end{proof}

\subsubsection{Regularized case}
Even though the unregularized case is already a contraction, we can include some regularization to decrease the Lipschitz constant. This can result in a faster convergence rate at the expense of accuracy in the reconstruction, i.e., convergence to the original function is not guaranteed.
\begin{proposition}
	\label{prop:FiniteContraction}
	The operator $\Thmh:\bspace^N\to\bspace^N$ is a Banach contraction with Lipschitz constant $|1-\tau(\lambda_N^{(m)}+\mu)|$.
\end{proposition}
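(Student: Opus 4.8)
The plan is to mirror the structure of Proposition~\ref{prop:FiniteContractionUnreg}, exploiting the fact that on $\bspace^N$ every function has a finite \ac{pswf} expansion, so the regularized operator $\Thmh$ acts diagonally in the prolate basis. First I would take two functions $f=\sum_{n=0}^N\anm\pro$ and $g=\sum_{n=0}^N\bnm\pro$ in $\bspace^N$, and compute $\Thmh f-\Thmh g$ explicitly. Writing $\Thmh(f)=\PS[(1-\mu\tau)f+\tau\Dm(h-f)]$, the difference removes the common $h$-term and leaves $\PS[(1-\mu\tau)(f-g)-\tau\Dm(f-g)]$. Since $\PS\Dm\pro=\lambdam\pro$ and $\PS\pro=\pro$ on $\bspace$, each coefficient is scaled by the factor $(1-\mu\tau)-\tau\lambdam=1-\tau(\lambdam+\mu)$. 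This is the central computation: it shows the operator is diagonal with eigenvalues $1-\tau(\lambdam+\mu)$.

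Next I would assemble the norm. By orthonormality of the $\pro$,
\begin{equation}
\|\Thmh f-\Thmh g\|^2=\sum_{n=0}^N(\anm-\bnm)^2\bigl(1-\tau(\lambdam+\mu)\bigr)^2.
\end{equation}
Because the eigenvalues $\lambdam$ form a decreasing sequence in $n$, the quantity $\lambdam+\mu$ is minimized at $n=N$, and I would argue that the factor $|1-\tau(\lambdam+\mu)|$ attains its maximum over $0\le n\le N$ at $n=N$. This yields the bound $\|\Thmh f-\Thmh g\|\le|1-\tau(\lambda_N^{(m)}+\mu)|\,\|f-g\|$, identifying the claimed Lipschitz constant.

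The final step is to confirm that this Lipschitz constant is strictly less than $1$, which is what makes it a genuine Banach contraction. Using the admissible range $0<\tau<2/(1+2\mu)$ together with $0<\lambdam<1$, I would check that $1-\tau(\lambdam+\mu)$ stays strictly above $-1$ and strictly below $1$, so its absolute value is bounded away from $1$.

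The main obstacle I anticipate is the monotonicity argument that the maximum of $|1-\tau(\lambdam+\mu)|$ over $n$ is achieved at $n=N$. Unlike the unregularized case, where $1-\lambdam$ is manifestly maximized at the smallest eigenvalue, here the presence of the absolute value means the factor can be negative, so the extremal index depends on the sign of $1-\tau(\lambdam+\mu)$ and on the interplay between $\tau$, $\mu$, and the spread of eigenvalues. I would need to verify carefully that for the prescribed parameter range the largest magnitude still corresponds to the smallest eigenvalue $\lambda_N^{(m)}$, rather than treating it as immediate.
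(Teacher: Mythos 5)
Your diagonalization in the prolate basis is exactly the paper's route---the coefficientwise factor $1-\tau(\lambdam+\mu)$ is the same computation that underlies the paper's one-line reduction to Proposition~\ref{prop:FiniteContractionUnreg}---and you correctly isolated the one delicate point. But the resolution you sketch does not close it, because you imported the wrong admissible range for $\tau$. The condition $0<\tau<2/(1+2\mu)$ comes from the earlier discussion of $\Thmh$ as a contraction on all of $\bspace$; it does guarantee $|1-\tau(\lambdam+\mu)|<1$ for every $n$, but it does \emph{not} guarantee that the maximum over $0\le n\le N$ is attained at $n=N$. Writing $a_n:=1-\tau(\lambdam+\mu)$, the sequence $a_0<a_1<\dots<a_N$ is increasing (the eigenvalues decrease in $n$), so $\max_n|a_n|=\max\{|a_0|,|a_N|\}$, and $|a_0|\le|a_N|$ holds if and only if $a_0+a_N\ge0$, i.e.,
\begin{equation}
\tau\le\frac{2}{\lambda_0^{(m)}+\lambda_N^{(m)}+2\mu}.
\end{equation}
This is precisely the step-size restriction the paper imposes in its proof of Proposition~\ref{prop:FiniteContraction}, and it is strictly stronger than your range whenever $\lambda_0^{(m)}+\lambda_N^{(m)}>1$---a common situation, since for well-concentrated regions the leading eigenvalues cluster near $1$. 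For $\tau$ in the gap $2/(\lambda_0^{(m)}+\lambda_N^{(m)}+2\mu)<\tau<2/(1+2\mu)$, the operator is still a Banach contraction by your strictness check, but its Lipschitz constant is $|1-\tau(\lambda_0^{(m)}+\mu)|$, not the claimed $|1-\tau(\lambda_N^{(m)}+\mu)|$, so the extremal-index claim you flagged as the main obstacle is genuinely false under the hypothesis you chose.

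The fix is the paper's: state and use $0<\tau\le2/(\lambda_0^{(m)}+\lambda_N^{(m)}+2\mu)$, which by the computation above is equivalent to $|1-\tau(\lambdam+\mu)|\le|1-\tau(\lambda_N^{(m)}+\mu)|$ for all $0\le n\le N$. Strictness of the constant then comes for free: $|1-\tau(\lambda_N^{(m)}+\mu)|<1$ requires $\tau<2/(\lambda_N^{(m)}+\mu)$, which follows from $\tau\le2/(\lambda_0^{(m)}+\lambda_N^{(m)}+2\mu)<2/(\lambda_N^{(m)}+\mu)$ since $\lambda_0^{(m)}+\mu>0$. Everything else in your argument---the diagonal action via $\PS\Dm\pro=\lambdam\pro$, the orthonormal expansion of the norm, and the identification of the extremal factor---matches the paper once the parameter range is corrected.
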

\begin{proof}
	Following the same reasoning as in Proposition \ref{prop:FiniteContractionUnreg}, we can readily write
	\begin{equation}
		\label{eq:cderivalt}
		\|\Thmh f-\Thmh g\|^2 \leq|1-\tau(\lambda_N^{(m)}+\mu)|^2||f-g||^2
	\end{equation}
	where the inequality follows from
	\begin{equation}
		\label{eq:TruncatedRegularized}
		|1-\tau(\lambda_n^{(m)}+\mu)|^2\leq|1-\tau(\lambda_N^{(m)}+\mu)|^2<1
	\end{equation}
	whenever $0<\tau\leq2/(\lambda_0^{(m)}+\lambda_N^{(m)}+2\mu)$ and the fact that $1>\lambda_0^{(m)}>\ldots>\lambda_N^{(m)}$. Using the right-hand side of (\ref{eq:TruncatedRegularized}), we conclude that it is a contraction mapping.
\end{proof}

An interesting observation is that the Lipschitz constant---for a fixed $\mu$---is small whenever the difference $\lambda_0^{(m)}-\lambda_N^{(m)}$ is small. This can be seen by using the upper bound of $\tau$ in the Lipschitz constant of Proposition \ref{prop:FiniteContraction}, i.e.,
\begin{equation}
	1-2\frac{\lambda_N^{(m)}+\mu}{\lambda_0^{(m)}+\lambda_N^{(m)}+2\mu}.
\end{equation}
As before, it is straightforward to conclude that the combined operator $\Thh:\bspace^N\to\bspace^N$ is a Banach contraction with Lipschitz constant $\sum_{m=1}^{M}\omega_m|1-\tau(\lambda_N^{(m)}+\mu)|$.

\vspace{5pt}

\subsubsection{Weighting of signal segments}
It is interesting to mention that there is a connection between $\lambda_N^{(m)}$ and the size of $\mathcal{D}_m$ that may impact the convergence rate. This may help to choose a different weighting depending on the size of the regions in order to possibly improve convergence. In particular, the larger the size of $\mathcal{D}_m$, the smaller the Lipschitz constant due to increasing $\lambda_N^{(m)}$ \cite{slepian_I, slepian_IV}. As a result, it may be advantageous to assign larger weights to large $\mathcal{D}_m$. Similar conclusions can be drawn when $N$ decreases, which means that convergence can be faster on a particular region the more concentrated the function is on that region.

\section{Numerical Examples}
\label{sec:simulations}
We illustrate the convergence properties of the proposed iterative method in the regularized case, i.e., using $\Thh$ instead of $T_h$ in \eqref{eq:Iteration} and in the unregularized case, using the iteration in \eqref{eq:Iteration} directly. The performance is evaluated by means of the \ac{nmse}, i.e.,
\begin{equation}
	\mathrm{NMSE} = 10\cdot\log_{10}\left(\frac{\|h-f_e\|^2}{\|h\|^2}\right)
\end{equation}
where $h$ is the original and $f_e$ is the extrapolated signal. As an example, we construct a two-dimensional bandlimited signal $h$ such that $\hat{P}_\mathcal{S}h=h$ where $\hat{P}_\mathcal{S}$ represents a truncation operation in the \ac{dft} domain.

\subsection{Performance}
The signal $h$ is sampled in the regions $\mathcal{D}_1$ to $\mathcal{D}_6$ (see Fig. \ref{fig:Combined}). We apply the iteration in \eqref{eq:Iteration}, using a uniform weighting as well as the operator $\hat{P}_\mathcal{S}$. Fig. \ref{fig:Combined} shows the original signal $h$ and the reconstructed signal $f_e$ after 1000 iteration steps, resulting in an \ac{nmse} of approximately \SI{-21.6}{\decibel}. The right-hand side shows the \ac{nmse} over the number of iterations. 

Naturally, the performance can be improved by using multiple continuous measurements, as more information about the original signal is used. However, it is difficult to make general statements regarding the performance change depending on the total support that is covered by the regions, as the performance heavily depends on the signal in the regions.

\begin{figure}
	\centering
	\includegraphics[width = 0.49\textwidth]{./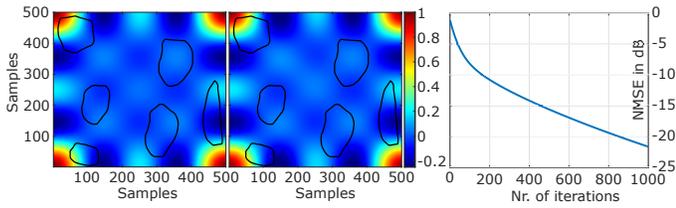}
	\caption{Left: Original signal $h$ together with the regions $\mathcal{D}_1,\ldots,\mathcal{D}_6$ used for reconstruction. Middle: Reconstructed signal $f_e$ after 1000 steps of iteration \eqref{eq:Iteration} with the corresponding regions at an \ac{nmse} of approximately \SI{-21.6}{\decibel}. Right: \ac{nmse} over the number of iterations}
	\label{fig:Combined}
\end{figure}

\subsection{Stability}
The need for a regularized version can be illustrated by using an example that incorporates nonidealities. In particular, we assume that the signal of interest is not perfectly bandlimited. We construct it by adding Gaussian functions to the signal $h$ such that it is nonbandlimited in the \ac{dft} domain, i.e., $\hat{P}_\mathcal{S}h\neq h$. The \ac{snr}, measuring the ratio between signal and out-of-band noise energy, has been set to approximately \SI{6.9}{\decibel} which is an arbitrary choice for illustration. Fig. \ref{fig:Regularization} shows how the \ac{nmse} in the unregularized case seems to grow without bound. In contrast, the regularized version of iteration \eqref{eq:Iteration} leads to an approximately constant \ac{nmse} of \SI{-8.9}{\decibel} after 100 iterations and therefore promotes stable convergence of the iteration.

\begin{figure}
	\centering
	\input{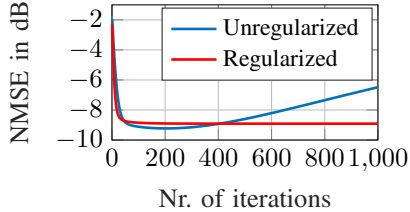}
	\caption{Reconstruction error for an approximately bandlimited signal at a \ac{snr} of approximately \SI{6.9}{\decibel} where the regularized case corresponds to iteration \eqref{eq:Iteration} using $\Thh$ and the unregularized case directly to iteration \eqref{eq:Iteration}. Four non-overlapping regions with a weight of $\omega_m=1/4$ for all $m$ and regularization parameters of $\mu=0.005$ and $\tau = 1.99/(1+2\mu)\approx1.97$ have been used.}
	\label{fig:Regularization}
\end{figure}

\section{Conclusion}
We proposed an iterative method to extrapolate multidimensional bandlimited signals from individually weighted continuous measurements on several bounded regions, i.e., truncated versions of the original signal. This can be seen as an extension of \cite{papoulis1975new}. We introduced regularization to the algorithm in order to stabilize it in the presence of errors or nonidealities. We proved convergence of the unregularized iteration to the original signal and showed that the regularized method converges to the solution of an optimization problem that is related to Tikhonov regularization. We discussed stability and convergence properties of our method in the case that the underlying functions can be represented by a truncated PSWF expansion and established a connection between the weights of the signal segments and the eigenvalues of PSWFs. We also illustrated the method in simulations and showed that the instability in the presence of errors can be resolved by using the regularized version of our method.

%\section*{Acknowledgment}

%\section*{References}

\bibliographystyle{IEEEtran}
\bibliography{strings}

% Generated by IEEEtran.bst, version: 1.12 (2007/01/11)
\begin{thebibliography}{10}
\providecommand{\url}[1]{#1}
\csname url@samestyle\endcsname
\providecommand{\newblock}{\relax}
\providecommand{\bibinfo}[2]{#2}
\providecommand{\BIBentrySTDinterwordspacing}{\spaceskip=0pt\relax}
\providecommand{\BIBentryALTinterwordstretchfactor}{4}
\providecommand{\BIBentryALTinterwordspacing}{\spaceskip=\fontdimen2\font plus
\BIBentryALTinterwordstretchfactor\fontdimen3\font minus
  \fontdimen4\font\relax}
\providecommand{\BIBforeignlanguage}[2]{{%
\expandafter\ifx\csname l@#1\endcsname\relax
\typeout{** WARNING: IEEEtran.bst: No hyphenation pattern has been}%
\typeout{** loaded for the language `#1'. Using the pattern for}%
\typeout{** the default language instead.}%
\else
\language=\csname l@#1\endcsname
\fi
#2}}
\providecommand{\BIBdecl}{\relax}
\BIBdecl

\bibitem{Shannon:1949aa}
C.~E. Shannon, ``Communication in the presence of noise,'' \emph{Proceedings of
  the IRE}, vol.~37, no.~1, pp. 10--21, 1949.

\bibitem{Yen:1956aa}
J.~Yen, ``On nonuniform sampling of bandwidth-limited signals,'' \emph{IRE
  Transactions on Circuit Theory}, vol.~3, no.~4, pp. 251--257, 1956.

\bibitem{Paley:1934aa}
R.~E. A.~C. Paley and N.~Wiener, \emph{Fourier transforms in the complex
  domain}.\hskip 1em plus 0.5em minus 0.4em\relax American Mathematical Soc.,
  1934, vol.~19.

\bibitem{Eldar:2015aa}
Y.~C. Eldar, \emph{Sampling Theory: Beyond Bandlimited Systems}.\hskip 1em plus
  0.5em minus 0.4em\relax Cambridge University Press, 2015.

\bibitem{Williams:1999aa}
E.~G. Williams, \emph{Fourier acoustics: sound radiation and nearfield
  acoustical holography}.\hskip 1em plus 0.5em minus 0.4em\relax Academic
  press, 1999.

\bibitem{Born:2013aa}
M.~Born and E.~Wolf, \emph{Principles of optics: electromagnetic theory of
  propagation, interference and diffraction of light}.\hskip 1em plus 0.5em
  minus 0.4em\relax Elsevier, 2013.

\bibitem{schleicher2007seismic}
J.~Schleicher, M.~Tygel, and P.~Hubral, \emph{Seismic true-amplitude
  imaging}.\hskip 1em plus 0.5em minus 0.4em\relax Society of Exploration
  Geophysicists, 2007.

\bibitem{Grande:2019aa}
E.~F. Grande, ``Sound field reconstruction in a room from spatially distributed
  measurements,'' in \emph{23rd International Congress on Acoustics}.\hskip 1em
  plus 0.5em minus 0.4em\relax German Acoustical Society (DEGA), 2019, pp.
  4961--68.

\bibitem{Chai:2000aa}
J.-X. Chai, X.~Tong, S.-C. Chan, and H.-Y. Shum, ``Plenoptic sampling,'' in
  \emph{Proceedings of the 27th annual conference on Computer graphics and
  interactive techniques}, 2000, pp. 307--318.

\bibitem{Stein:2003aa}
E.~M. Stein and R.~Shakarchi, \emph{Complex analysis}, ser. Princeton Lectures
  in Analysis, II.\hskip 1em plus 0.5em minus 0.4em\relax Princeton University
  Press, Princeton, NJ, 2003.

\bibitem{papoulis1975new}
A.~Papoulis, ``A new algorithm in spectral analysis and band-limited
  extrapolation,'' \emph{IEEE Transactions on Circuits and Systems}, vol.~22,
  no.~9, pp. 735--742, 1975.

\bibitem{cadzow1979extrapolation}
J.~Cadzow, ``An extrapolation procedure for band-limited signals,'' \emph{IEEE
  Transactions on Acoustics, Speech, and Signal Processing}, vol.~27, no.~1,
  pp. 4--12, 1979.

\bibitem{tom1981convergence}
V.~Tom, T.~Quatieri, M.~Hayes, and J.~McClellan, ``Convergence of iterative
  nonexpansive signal reconstruction algorithms,'' \emph{IEEE Transactions on
  Acoustics, Speech, and Signal Processing}, vol.~29, no.~5, pp. 1052--1058,
  1981.

\bibitem{donoho1989uncertainty}
D.~L. Donoho and P.~B. Stark, ``Uncertainty principles and signal recovery,''
  \emph{SIAM Journal on Applied Mathematics}, vol.~49, no.~3, pp. 906--931,
  1989.

\bibitem{landau1961recovery}
H.~J. Landau and W.~L. Miranker, ``The recovery of distorted band-limited
  signals,'' \emph{Journal of Mathematical Analysis and Applications}, vol.~2,
  no.~1, pp. 97--104, 1961.

\bibitem{youla1982image}
D.~C. Youla and H.~Webb, ``Image restoration by the method of convex
  projections: Part 1 theory,'' \emph{IEEE transactions on medical imaging},
  vol.~1, no.~2, pp. 81--94, 1982.

\bibitem{sezan1982image}
M.~I. Sezan and H.~Stark, ``Image restoration by the method of convex
  projections: Part 2-applications and numerical results,'' \emph{IEEE
  Transactions on Medical Imaging}, vol.~1, no.~2, pp. 95--101, 1982.

\bibitem{schafer1981constrained}
R.~W. Schafer, R.~M. Mersereau, and M.~A. Richards, ``Constrained iterative
  restoration algorithms,'' \emph{Proceedings of the IEEE}, vol.~69, no.~4, pp.
  432--450, 1981.

\bibitem{bauschke2012firmly}
H.~H. Bauschke, S.~M. Moffat, and X.~Wang, ``Firmly nonexpansive mappings and
  maximally monotone operators: correspondence and duality,'' \emph{Set-Valued
  and Variational Analysis}, vol.~20, no.~1, pp. 131--153, 2012.

\bibitem{slepian_I}
D.~Slepian and H.~O. Pollak, ``Prolate spheroidal wave functions, fourier
  analysis and uncertainty—i,'' \emph{Bell System Technical Journal},
  vol.~40, no.~1, pp. 43--63, 1961.

\bibitem{slepian_IV}
D.~Slepian, ``Prolate spheroidal wave functions, fourier analysis and
  uncertainty—iv: extensions to many dimensions; generalized prolate
  spheroidal functions,'' \emph{Bell System Technical Journal}, vol.~43, no.~6,
  pp. 3009--3057, 1964.

\bibitem{bertero1998introduction}
M.~Bertero and P.~Boccacci, \emph{Introduction to inverse problems in
  imaging}.\hskip 1em plus 0.5em minus 0.4em\relax CRC press, 1998.

\bibitem{loya2017amazing}
P.~Loya, \emph{Amazing and aesthetic aspects of analysis}.\hskip 1em plus 0.5em
  minus 0.4em\relax Springer, 2017.

\bibitem{royden1968real}
H.~L. Royden, \emph{Real analysis}.\hskip 1em plus 0.5em minus 0.4em\relax
  Krishna Prakashan Media, 1968.

\bibitem{bauschke2011convex}
H.~H. Bauschke, P.~L. Combettes \emph{et~al.}, \emph{Convex analysis and
  monotone operator theory in Hilbert spaces}, ser. CMS Books in
  Mathematics.\hskip 1em plus 0.5em minus 0.4em\relax Springer, 2011, vol. 408.

\bibitem{Combettes:2015aa}
P.~L. Combettes and I.~Yamada, ``Compositions and convex combinations of
  averaged nonexpansive operators,'' \emph{Journal of Mathematical Analysis and
  Applications}, vol. 425, no.~1, pp. 55--70, 2015.

\bibitem{Petryshyn:1973aa}
W.~V. Petryshyn and T.~E. Williamson~Jr., ``Strong and weak convergence of the
  sequence of successive approximations for quasi-nonexpansive mappings,''
  \emph{Journal of Mathematical Analysis and Applications}, vol.~43, no.~2, pp.
  459--497, 1973.

\bibitem{eicke1992iteration}
B.~Eicke, ``Iteration methods for convexly constrained ill-posed problems in
  hilbert space,'' \emph{Numerical Functional Analysis and Optimization},
  vol.~13, no. 5-6, pp. 413--429, 1992.

\bibitem{sanz1983papoulis}
J.~Sanz and T.~Huang, ``On the papoulis-gerchberg algorithm,'' \emph{IEEE
  Trans. Circuits and Systems}, p. 907, 1983.

\bibitem{engl1996regularization}
H.~W. Engl, M.~Hanke, and A.~Neubauer, \emph{Regularization of inverse
  problems}.\hskip 1em plus 0.5em minus 0.4em\relax Springer Science \&
  Business Media, 1996, vol. 375.

\end{thebibliography}

\end{document}